\title{Limited-Capacity Many-To-Many Point Matching in One Dimension }
\author{Fatemeh Rajabi-Alni
\thanks{Corresponding Author. Department of Computer Engineering, Islamic Azad University, North Tehran Branch, Tehran, Iran, (fatemehrajabialni@yahoo.com).}
\and Alireza Bagheri 
\thanks{Department of Computer Engineering and IT, Amirkabir University of Technology, Tehran, Iran.}
}
\begin{document}

\maketitle

\begin{abstract}
Given two point sets $S$ and $T$, in a \textit{many-to-many matching} between $S$ and $T$ each point in $S$ is assigned to one or more points in $T$ and vice versa. A generalization of the many-to-many matching problem is \textit {the limited capacity many-to-many matching problem}, where the number of points that can be matched to each point, that is the capacity of each point, is limited. In this paper, we provide an $O\left(n^2\right)$ time algorithm for the one dimensional minimum-cost limited capacity many-to-many matching problem, where $\left|S\right|+\left|T\right|=n$. Our algorithm improves the best previous time complexity of $O(kn^2)$, that in which $k$ is the largest capacity of the points in $S \cup T$. In this problem, both $S$ and $T$ lie on the real line and the cost of matching $s \in S$ to $t \in T$ is equal to the distance between $s$ and $t$. 
\end{abstract}

\begin{keywords} 
many-to-many point matching, one dimensional point-matching, limited capacity point matching
\end{keywords}


\pagestyle{myheadings}
\thispagestyle{plain}
\markboth{F. RAJABI-ALNI AND A. BAGHERI}{LIMITED-CAPACITY MANY-To-MANY POINT MATCHING IN ONE DIMENSION}

\section{Introduction}
\label{IntroSect}

Given two point sets $S$ and $T$ with $|S|+|T|=n$, a matching function between $S$ and $T$ assigns some points of one set to some points of the other set \cite{1}. The matching has applications in a variety of different fields such as computational biology \cite{2}, operations research \cite{3}, pattern recognition \cite{4}, computer vision \cite{5}. The matching problem has many variations. In a \textit {one-to-one matching} between two point sets, each point of one set is matched to a unique point of the other set, so there exists a perfect correspondence between the members of the sets \cite{6}. In a \textit {many-to-one matching} from $S$ to $T$, each point of $S$ is assigned to one or more points of $T$, and each point of $T$ is assigned to a unique point of $S$ \cite{7}. In a \textit {many-to-many matching} between the $S$ and $T$ each point of $S$ is matched to one or more points of $T$ and vice versa \cite{3}.

Eiter and Mannila \cite{8} reduced the many-to-many matching problem to the minimum-weight perfect matching problem in a bipartite graph, and solved it using the Hungarian method in $O(n^3)$ time. The many-to-many matching between two sets on the real line is solved by reducing it to the problem of finding the shortest path through a directed acyclic graph in $O(n^2)$ time \cite{9,10}, and finally Colannino et al. \cite{1} presented an $O(n \log {n})$-time solution for this problem using the dynamic programming method.

A generalization of the many-to-many matching problem, \textit {limited capacity many-to-many matching problem}, is that in which the number of the points that can be matched to each point is limited. Let $C_S=\{\alpha_1,\alpha_2, \dots,\alpha_y\}$ and $C_T=\{\beta_1,\beta_2,\dots,\beta_z\}$ be the capacity sets of the points in $S$ and $T$, respectively. A minimum-cost limited-capacity many-to-many matching is a matching that matches each point $s_i \in S$ to at least $1$ and at most $\alpha_i$ points in $T$, and each point $t_j \in T$ to at least $1$ and at most $\beta_j$ points in $S$, for all $i,j$ where $1\leq i\leq y$ and $1\leq j\leq z$, such that sum of the matching costs is minimized. Schrijver \cite{11} proved that a minimum-cost limited capacity many-to-many matching can be found in strongly polynomial time.   
A special case of the minimum-cost limited capacity many-to-many matching problem is that in which both $S$ and $T$ lie on the real line and the cost of matching $s_i \in S$ to $t_j \in T$ is equal to the distance between $s_i$ and $t_j$. Panahi and Mohaddes \cite{12} proposed an $O(kn^2)$ time algorithm for the one dimensional minimum limited capacity many-to-many matching, called ODMLM-matching, where $k={\mathop{max}({\mathop{max}_{1\leq i\leq y} {\alpha }_i\ },\ {\mathop{max}_{1\leq j\leq z} {\beta }_j\ })\ }$. In this paper we give an $O(n^2)$ time algorithm for this problem. 

The remainder of this paper is organized as follows. Preliminary definitions are in section \ref{PreliminSect}.  Our algorithm is described in section \ref{OMAsection}. The matching algorithm presented in \cite{1}, computing an optimal many-to-many matching between two point sets on the real line, is briefly described in section \ref{TMMAsection}.

\section{Preliminaries}
\label{PreliminSect}
In this section we describe some preliminary definitions and assumptions used in this paper. We represent both the point $a$ and its $x$-coordinate in the plane using the same symbol $a$. Let $S=\{s_1,s_2,\dots,s_y\}$ and $T=\{t_1,t_2,\dots,t_z\}$ be two sorted sets of points on the real line, that is $s_1<s_2<\dots<s_y$ and $t_1<t_2<\dots<t_z$, such that $s_1$ be the smallest point in $S \cup T$. Let $S \cup T$ be partitioned into subsets $A_0,A_1,A_2,\dots $ such that all points in $A_i$ are smaller than all points in $A_{i+1}$ for all $i$: the point of highest coordinate in $A_i$ lies to the left of the point of lowest coordinate in $A_{i+1}$ (Figure \ref{fig:1}).

Let $A_w=\{a_1,a_2,\dots,a_s\}$ with $a_1< a_2<\dots<a_s$ and $A_{w+1}=\{b_1,b_2,\dots,b_t\}$ with $b_1< b_2<\dots<b_t$. We represent $|b_1-a_i|$ by $e_i$, $|b_i-b_1|$ by $f_i$. It is obvious that $f_1=0$. Moreover, $a_0$ represents the largest point of $A_{w-1}$ for $w>0$. These definitions are presented in Figure \ref{fig:1}. 
In an ODMLM-matching, a point matching to its capacity number of points is called a \textit{saturated} point.

 \begin{figure}
\vspace{-6cm}
\hspace{-13cm}
\resizebox{3\textwidth}{!}{%
  \includegraphics{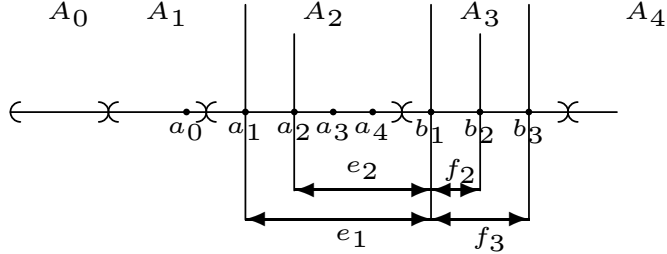}
}
\vspace{-44cm}
\caption{The notation and definitions in partitioned point set $A \cup B$.}
\label{fig:1}       
\end{figure}

\section{The many-to-many matching algorithm in one dimension }
\label{TMMAsection}

In this section, we briefly explain the dynamic programming matching algorithm presented by Colannino et al. \cite{1} which determines a minimum-cost many-to-many matching between $S$ and $T$. The time complexity of their algorithm is $O(n \log n)$ for the unsorted point sets $S$ and $T$, but if the point sets are given in sorted order the algorithm runs in linear time. Let $C(q)$ denote the cost of a minimum cost many-to-many matching for the set of the points $\left\{p\in S\cup T\right|\ p\leq q\}$, the algorithm presented in \cite{1} computes $C(q)$ for all points $q$ in $S\cup T$. In fact, the cost of the minimum many-to-many matching between $S$ and $T$ is equal to $C(m)$, where $m$ is the largest point in $S \cup T$.
In the following, we describe the lemmas and the matching algorithm presented in \cite{1}. The proofs of these lemmas can be found in \cite{1}.

\begin{lemma}
\label{lem1}
Let $a \in S,b\in T$ and $c \in S,d\in T$ such that $a\leq b<c\leq d$. A minimum-cost many-to-many matching contains no pairs $(a,d)$, so any matching $(a,d)$ in a minimum-cost many-to-many matching, with $a<d$, satisfies $a \in A_i$ and $d \in A_{i+1}$, for some $i\geq 0$ (Figure \ref{fig:2}a) \cite{1}. 
\end{lemma}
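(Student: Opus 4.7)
The plan is to argue by an exchange (swap) argument. Suppose, for contradiction, that a minimum-cost many-to-many matching $M$ contains the edge $(a,d)$ under the stated hypotheses, so $a,c \in S$, $b,d \in T$, and $a \le b < c \le d$ (hence $a<d$). The natural surgery is to modify $M$ into
\[ M' = (M \setminus \{(a,d)\}) \cup \{(a,b),(c,d)\}. \]
First I would check that $M'$ is still a feasible many-to-many matching, i.e.\ every point retains at least one partner of the opposite set. This is immediate: $a$ acquires $b$, $b$ acquires $a$, $c$ acquires $d$, $d$ acquires $c$, and no other edge of $M$ is removed.

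Next I would compare costs, viewing $M$ as a set of edges. Removing $(a,d)$ saves $d-a$, while inserting $(a,b)$ and $(c,d)$ costs at most $(b-a)+(d-c)$ — strictly less if either of these edges was already present in $M$, because re-adding an existing element of a set is free. Therefore
\[ \mathrm{cost}(M') - \mathrm{cost}(M) \le (b-a) + (d-c) - (d-a) = b - c < 0, \]
where the strict sign comes from the hypothesis $b < c$. This contradicts the minimality of $M$ and establishes the first sentence. The only subtlety here is precisely this cost bookkeeping when $(a,b)$ or $(c,d)$ already lies in $M$; treating $M$ as a set of edges (rather than a multiset) makes the inequality go through cleanly, and I view this as the main — albeit mild — obstacle.

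The second sentence follows by unpacking what the partition $A_0, A_1, \ldots$ forces. These subsets are maximal runs of points from a single side, alternating between subsets of $S$ and of $T$ (with $A_0 \subseteq S$, since $s_1$ is the global minimum). Suppose $(a,d) \in M$ with $a \in A_i$ and $d \in A_j$, $a<d$; because $(a,d)$ is an $S$-$T$ pair, $i$ and $j$ have opposite parities. If $j \ge i+2$, then in fact $j \ge i+3$, and the two runs $A_{i+1}, A_{i+2}$ both lie strictly between $a$ and $d$ on the line, one inside $T$ and one inside $S$. Choosing $b \in A_{i+1}$ and $c \in A_{i+2}$ realises the hypothesis of the first part, yielding a contradiction; hence $j = i+1$, which is the claimed structural statement.
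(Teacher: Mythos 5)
Your exchange argument (delete $(a,d)$, insert $(a,b)$ and $(c,d)$, note the cost drops by at least $c-b>0$ while feasibility is preserved) is correct, and the deduction that matched pairs can only join consecutive blocks $A_i$, $A_{i+1}$ via the parity/alternation of the blocks is also sound. This is essentially the same swap argument the paper relies on: it defers the proof of this lemma to \cite{1}, but its own proof of the capacity-limited analogue (Lemma \ref{lem5}) performs exactly this replacement, so your route matches the intended one.
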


\begin{lemma}
\label{lem2}
Let $b<c$ be two points in $S$, and $a<d$ be two points in $T$ such that $a\leq b<c\leq d$, then both of $(a,b)$ and $(b,d)$ could not be in an optimal many-to-many matching, so they are mutually exclusive (Figure \ref{fig:2}b) \cite{1}.
\end{lemma}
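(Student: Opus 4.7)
My plan is to prove Lemma~\ref{lem2} by contradiction using a local exchange argument, very much in the spirit of the classical ``uncrossing'' proofs that appear throughout one-dimensional matching arguments. Suppose, toward a contradiction, that some optimal many-to-many matching $M$ contains both $(a,b)$ and $(b,d)$. The key observation I would exploit is that $c\in S$, so by the definition of a many-to-many matching $c$ must be matched to at least one point $e\in T$. The position of $e$ relative to $d$ will drive a short case split in which I produce a cheaper feasible matching $M'$, contradicting the optimality of $M$.

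First I would dispose of the case $e=d$, i.e., $(c,d)\in M$ already. Here I simply set $M':=M\setminus\{(b,d)\}$. After deletion, $b$ is still matched (to $a$) and $d$ is still matched (to $c$), so $M'$ is a feasible many-to-many matching. Since $b<d$ strictly, the cost drops by $|d-b|>0$, contradicting optimality. Next I would handle the remaining case $e\neq d$ by defining
\[
M' := \bigl(M\setminus\{(b,d)\}\bigr)\cup\{(c,d)\}.
\]
Again $b$ retains the edge $(a,b)$, and $d$ is matched by the newly inserted edge $(c,d)$, so every point still meets the at-least-one-match requirement; $c$ simply acquires an extra partner, which is allowed since Lemma~\ref{lem2} lives in the unrestricted many-to-many setting of \cite{1}. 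The cost change is
\[
|d-c|-|d-b| = (d-c)-(d-b) = b-c < 0,
\]
which again contradicts the optimality of $M$. Together, the two cases rule out the simultaneous presence of $(a,b)$ and $(b,d)$ in $M$.

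The only subtle point I expect to worry about is the feasibility bookkeeping in the exchange step: I must make sure that after deleting $(b,d)$ both endpoints still meet the ``at least one match'' requirement, which is why it is essential to invoke $(a,b)\in M$ (to keep $b$ covered) and either the pre-existing $(c,d)$ or the newly added $(c,d)$ (to keep $d$ covered). This is really the only obstacle; the cost comparison itself is immediate from $a\le b<c\le d$, and no multi-edge or capacity accounting is needed here since Lemma~\ref{lem2} precedes the limited-capacity generalization. Thus the whole argument reduces to picking the right single edge to delete (and, in the second case, the right single edge to insert) so that feasibility is preserved while the total length strictly decreases.
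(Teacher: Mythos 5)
Your proof is correct and takes essentially the same route the paper relies on: the paper defers the proof of Lemma~\ref{lem2} to \cite{1}, but its own proof of the capacitated analogue (Lemma~\ref{lem6}) is exactly your Case~2 exchange --- delete $(b,d)$ and insert $(c,d)$, with $b$ kept covered by $(a,b)$ and the cost dropping by $c-b>0$. Your separate Case~1 (when $(c,d)$ is already present, so you only delete $(b,d)$) is just careful bookkeeping for the same argument and is fine.
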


\begin{lemma}
\label{lem3}
Let $b<c$ be two points in $S$, and $a<d$ be two points in $T$ such that $a\leq b<c\leq d$, then a minimum cost many-to-many matching that contains $(a,c)$ could not contain $(b,d)$, and vice versa (Figure \ref{fig:2}c) \cite{1}.
\end{lemma}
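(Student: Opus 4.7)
The plan is an uncrossing swap argument by contradiction, in the classical style used for planar matching problems. Suppose some minimum-cost many-to-many matching $M$ contains both of the crossing pairs $(a,c)$ and $(b,d)$. I would form the modified edge set
\[
M' \;=\; \bigl(M \setminus \{(a,c),\,(b,d)\}\bigr) \,\cup\, \{(a,b),\,(c,d)\},
\]
which replaces the two crossing edges by the two ``nested'' edges, and then show that $M'$ is still a valid many-to-many matching of strictly smaller total cost, contradicting the optimality of $M$.

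For feasibility I would argue that the only points whose incidence sets are altered by the swap are $a,b,c,d$; all other edges of $M$ survive unchanged. After the swap, $a$ is paired with $b$ and $c$ is paired with $d$ through the two newly added edges alone, so each of the four retains at least one partner on the opposite side of the bipartition. Hence every point of $S \cup T$ is still matched to at least one point of the other set, and $M'$ is indeed a valid many-to-many matching.

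For the strict cost improvement I would use the ordering $a \leq b < c \leq d$ to unfold the absolute values with known signs, obtaining
\[
\bigl(|c-a|+|d-b|\bigr) - \bigl(|b-a|+|d-c|\bigr) \;=\; (c-a)+(d-b)-(b-a)-(d-c) \;=\; 2(c-b) \;>\; 0.
\]
In the generic case where neither $(a,b)$ nor $(c,d)$ already lies in $M$, the total cost drops by exactly $2(c-b)$. If one or both of those pairs happens to already be present in $M$, the set union discards the duplicates and the net effect is to remove $(a,c)$ and $(b,d)$ without paying the full replacement cost, so the decrease is even larger. Either way $M'$ strictly beats $M$, contradicting optimality; the ``vice versa'' clause is simply the contrapositive of the same implication.

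The only real subtlety is to confirm that the uncrossing move never leaves some point unmatched. Once one observes that the two new edges alone already re-pair the four swapped endpoints among themselves, feasibility is immediate and the whole argument collapses to the one-line sign computation above.
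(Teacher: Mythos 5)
Your uncrossing argument is correct and is essentially the standard exchange proof: the paper itself defers the proof of this lemma to Colannino et al.~\cite{1}, but the same swap of $(a,c),(b,d)$ for $(a,b),(c,d)$ with the $2(c-b)>0$ cost drop is exactly the style of argument the paper uses for its own analogous Lemmas \ref{lem5}--\ref{lem7}. Your extra care about feasibility and about the case where $(a,b)$ or $(c,d)$ already lies in $M$ only strengthens the argument; nothing is missing.
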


\begin{lemma}
\label{lem4}
In a minimum cost many-to-many matching, each $A_i $ for all $i>0$ contains a point $q_i$, such that all points in $A_i$ lying to the left of $q_i$ are matched with the points in $A_{i-1}$ and all points in $A_i$ lying to the right of $q_i$ are matched with the points in $A_{i+1}$ \cite{1}.
\end{lemma}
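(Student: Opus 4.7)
The plan is a contradiction argument that reduces the claim to Lemma~3. By Lemma~1, every matching edge incident to a point of $A_i$ goes to $A_{i-1}$ or $A_{i+1}$. Let me define $L_i \subseteq A_i$ to be the points with at least one matching partner in $A_{i-1}$, and $R_i \subseteq A_i$ the points with at least one matching partner in $A_{i+1}$; since every point must be matched at least once, $A_i = L_i \cup R_i$. The whole lemma reduces to showing that $L_i$ cannot interleave with $R_i$: no pair $p<q$ in $A_i$ can satisfy $p \in R_i$ and $q \in L_i$. Once this non-interleaving is established, I will take $q_i$ to be the rightmost element of $L_i$ (equivalently the leftmost element of $R_i$, or a shared point if the two sets overlap), and the stated structure follows immediately: any point strictly left of $q_i$ is in $L_i\setminus R_i$ and thus matched only into $A_{i-1}$, and symmetrically on the right.

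To prove the non-interleaving, I will assume toward a contradiction that $p<q$ are in $A_i$ with $p \in R_i$ and $q \in L_i$, and fix witnesses $d \in A_{i+1}$ matched with $p$ and $a \in A_{i-1}$ matched with $q$. By the ordering of the partition, $a < p < q < d$. The partition $A_0,A_1,\dots$ groups consecutive same-type points into alternating blocks (this is visible from the notation of Figure~\ref{fig:1}, where $A_w$ is labelled with $a_i$-points and $A_{w+1}$ with $b_i$-points, and from the fact that $e_i=|b_1-a_i|$ is meaningful as a matching cost only if the two blocks lie in opposite sets). Consequently $p$ and $q$ lie in one of $S,T$ and $a$ and $d$ lie in the other. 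Without loss of generality, $p,q \in S$ and $a,d\in T$.

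At this point the four points $a < p < q < d$ fit the hypothesis of Lemma~3 verbatim with $b=p$, $c=q$: two points $b<c$ in $S$, two points $a<d$ in $T$, and $a \leq b < c \leq d$. Since our minimum-cost matching contains $(a,c)=(a,q)$, Lemma~3 forbids $(b,d)=(p,d)$; but $(p,d)$ was assumed to be in the matching, a contradiction. Hence no such interleaving pair exists.

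The main thing I would want to be careful about is the type identification that lets Lemma~3 apply without change, since that lemma is stated only for the configuration with the two middle points in $S$ and the two outer points in $T$ (or its symmetric version). This is really where the alternating-block structure of the partition earns its keep; any other reading of the partition would force me to handle a mixed case in which $p$ and $q$ belong to different sets, and neither Lemma~2 nor Lemma~3 directly applies. Once the alternating structure is invoked, the remainder of the argument is a one-line appeal to Lemma~3.
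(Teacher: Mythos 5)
Your proof is correct and takes essentially the same route as the paper: the paper defers Lemma \ref{lem4} to \cite{1}, but its own proof of the capacity analogue, Lemma \ref{lem8}, rests on exactly your key step --- an interleaved pair in $A_i$ (a smaller point matched rightward, a larger one matched leftward) contradicts Lemma \ref{lem3}. Your write-up simply spells out the bookkeeping the paper leaves implicit, namely Lemma \ref{lem1} to confine edges to adjacent blocks, the alternating $S$/$T$ structure of the blocks needed to invoke Lemma \ref{lem3}, and the choice of $q_i$ as the rightmost leftward-matched point.
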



 \begin{figure}
\vspace{-5cm}
\hspace{-10cm}
\resizebox{2.5\textwidth}{!}{%
  \includegraphics{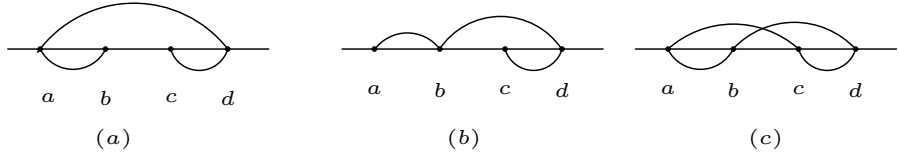}
}
\vspace{-37.5cm}
\caption{Suboptimal matchings. (a) $(a,d)$ is a suboptimal matching. (b) $(a,b)$  and $(b,d)$ do not both belong to an optimal matching. (c) $(a,c)$  and $(b,d)$ do not both belong to an optimal matching.}
\label{fig:2}       
\end{figure}

In fact, we can consider the point $q_i$ defined in Lemma \ref {lem4} as a separating point. As the separating points are explored, their algorithm easily computes the output. They assumed that $C(p)=\infty$ for all points $p \in A_0$. Let $A_w=\{a_1,a_2,\dots,a_s\}$ and $A_{w+1}=\{b_1,b_2,\dots,b_t\}$. Their dynamic programming algorithm's goal is to compute $C(b_i)$ for each $b_i \in A_{w+1}$, assuming that $C(p)$ has been computed for all points $p \leq b_i$ in $S\cup T$. Depending on the values of $w$, $s$, and $i$ five cases arise.

\bf Case 0: \normalfont $w=0$. In this case two situations arise: $i\leq s$ and $i>s$. In the first situation the optimal matching is computed by assigning the first $s-i$ elements of $A_0$ to $b_1$ and the remaining $i$ elements pairwise (Figure \ref{fig:3}a), so $C\left(b_i\right)$ is computed from:
\[C\left(b_i\right)=\sum^s_{j=1}{e_j}+\sum^i_{j=1}{f_j}.\]

In the second situation, $i>s$, the cost is minimized by matching the first $s$ points in $A_1$ pairwise with the points in $A_0$, and the remaining $i-s$ points in $A_1$ with $a_s$ (Figure \ref{fig:3}b). So :
\[C\left(b_i\right)=\left(i-s\right)e_s+\sum^s_{j=1}{e_j}+\sum^i_{j=1}{f_j}.\]


\begin{figure}
\vspace{-5cm}
\hspace{-7cm}
\resizebox{2\textwidth}{!}{%
  \includegraphics{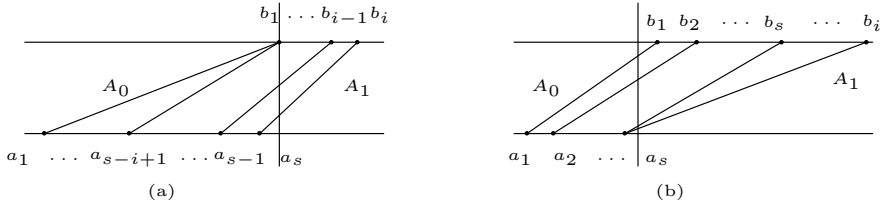}
}
\vspace{-29cm}
\caption{Case 0: $w=0$. (a) $1 \leq i \leq s$. (b) $s<i \leq t$.}
\label{fig:3}       
\end{figure}

\begin{description}

\item[Case 1:] $w>0,s=t=1$. This case is illustrated in Figure \ref{fig:4}a. By Lemma \ref{lem4}, $b_1$ must be matched with the point $a_1$. So we consider $a_1$ only when it reduces the cost of $C\left(b_1\right)$. So:
\[C\left(b_1\right)=e_1+\min(C\left(a_0\right),C\left(a_1\right)).\]

\item[Case 2:] $w>0,s=1,t>1$. By Lemma \ref{lem4}, in a minimum cost matching all points in $A_{w+1}$ should be matched with $a_1$ as presented in Figure \ref{fig:4}b. As case 1, $C\left(b_i\right)$ includes $C\left(a_1\right)$ if $a_1$ covers other points in $A_{w-1}$; otherwise, $C\left(b_i\right)$ includes $C\left(a_0\right)$. So the value of $C\left(b_i\right)$ is:
\[C\left(b_i\right)=\sum^i_{j=1}{f_j}+ie_1+\min\left(C\left(a_0\right),C\left(a_1\right)\right).\] 

\item[Case 3:] $w>0,s>1,t=1$. According to Lemma \ref{lem4}, the point $a_i\in A_w$ should be determined such that all points lying to the left of $a_i$ are matched to points in $A_{w-1}$ and all points lying to the right of $a_i$ are matched to points in $A_{w+1}$ (Figure \ref{fig:4}c). So $a_i$ is the point that minimizes the value of $C\left(b_1\right)$ in the equation: 
\[C\left(b_1\right)={\min_{1\leq i\leq s} (\sum^s_{j=i}{e_j}+C\left(a_{i-1}\right))\ }.\] 


\begin{figure}
\vspace{-4cm}
\hspace{-5cm}
\resizebox{1.8\textwidth}{!}{%
  \includegraphics{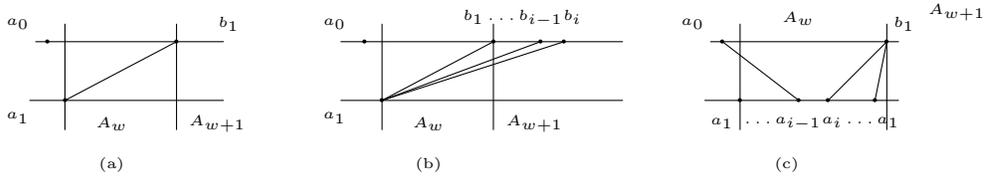}
}
\vspace{-26cm}
\caption{(a). Case 1: $w>0,s=t=1$. (b) Case 2: $w>0,s>1,t=1$. (c) Case 3: $w>0,s>1,t=1$.}
\label{fig:4}       
\end{figure}

\item[Case 4:] $w>0,s>1,t>1$. In this case, we should find the point $q$ that splits $A_w$ to the left and right. Let $X\left(b_i\right)$ be the cost of connecting $b_1,b_2,\dots,b_i$ to at least $i+1$ points in $A_w$ (Figure \ref{fig:5}a). Let $S_i=\sum^s_{j=i}{e_j}+C(a_{i-1})$ for $1\leq i\leq s$ be the cost of connecting the points $a_i,a_{i+1},\dots,a_s$ to $b_1$, plus $C(a_{i-1})$, and $M_i=\min\{S_1,S_2,\dots,S_i\}$, then:
\[X\left(b_i\right)=M_{s-i}+\sum^i_{j=1}{f_j},\ 1\leq i<s.\] 
Let $Y\left(b_i\right)$ be the cost of connecting $b_1,b_2,\dots,b_i$ to exactly $i$ points in $A_w$ (Figure \ref{fig:5}b), then:
\[Y\left(b_i\right)=\sum^s_{j=s-i+1}{e_j}+\sum^i_{j=1}{f_j}+C\left(a_{s-i}\right),\ 1\leq i\leq s. \] 
Finally, let $Z\left(b_i\right)$ denote the cost of connecting $b_1,b_2,\dots,b_i$ to fewer than $i$ points in $A_w$, as depicted in Figure \ref{fig:5}c, then we have:
\[Z\left(b_i\right)={\mathop{min}_{s-i+2\leq j\leq s} (\sum^s_{h=j}{e_h}+\sum^i_{j=1}{f_j}+\left(i+j-s-1\right)e_s+C\left(a_{j-1}\right))\ },\ 1<i.\] 

\end{description}

Note that \[Z\left(b_i\right)=e_s+f_i+min(Y\left(b_{i-1}\right),Z\left(b_{i-1}\right)),\] \[Y\left(b_i\right)=Y\left(b_{i-1}\right)+e_{s-i+1}+f_i+C\left(a_{s-i}\right)-C\left(a_{s-i+1}\right),\] and \[M_i=min(M_{i-1},S_i).\] So the values of $X\left(b_i\right)$, $Y\left(b_i\right)$ and $Z\left(b_i\right)$ can be computed in $O(s+t)$ time. Note that
\[C\left(b_i\right)=\left\{ 
\begin{array}{lr}
\min(X\left(b_i\right),Y\left(b_i\right),Z\left(b_i\right)) & 1\leq i<s 
 \\ 
 \min(Y\left(b_s\right),Z\left(b_s\right)) & i=s 
 \\ 
 C\left(b_{i-1}\right)+e_s+f_i & s<i\leq t 
 \end{array}
\right.,\] 
so the value of $C\left(b_i\right)$ for all $1 \leq i \leq t$ can be computed in $O(s+t)$ time.

\begin{figure}
\vspace{-5cm}
\hspace{-9cm}
\resizebox{2.3\textwidth}{!}{%
  \includegraphics{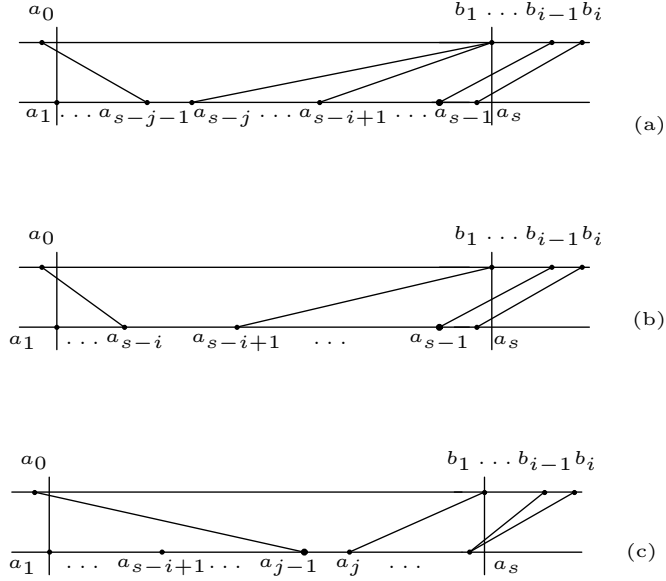}
}
\vspace{-28.5cm}
\caption{Case 4: $w>0,s>1,t>1$. (a) Computing $X(b_i)$. (b) Computing $Y(b_i)$. (c) Computing $Z(b_i)$.}
\label{fig:5}       
\end{figure}

 \section{Our matching algorithm}
 \label{OMAsection}
In this section, we introduce our matching algorithm which is based on the algorithm of Colannino et al. \cite{1}. It is not hard to see that the Lemma \ref{lem3} also applies for an ODMLM-matching. But instead of Lemma \ref{lem1}, Lemma \ref{lem2}, and Lemma \ref{lem4}, we present other lemmas considering the capacity limitations.
Let $Cap(q)$ denote the capacity of the point $q$, i.e. the number of the points that can be matched to $q$. For any point $q$, $C(q,j)$ denotes the cost of an ODMLM-matching for the set of points $\{p\in S\cup T,with \  p\leq q \ and \  Cap\left(q\right)=j\}$. Note that initially $C\left(p_i,k\right)=\infty $ for all $p_i\in S\cup T$ and $1\le k\le Cap(p_i)$. If $m$ is the largest point in $S\cup T$, then $C(m,Cap(m))$ is the cost of an ODMLM-matching.

\begin{lemma}
\label{lem5}
Let $a\in S,b\in T$ and  $c\in S,d\in T$ such that $a\leq b<c\leq d$. If an ODMLM-matching contains the pair $(a,d)$, then either $b$ or $c$ is a saturated point.
\end{lemma}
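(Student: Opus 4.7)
The plan is to argue by contradiction. Suppose an optimal ODMLM-matching $M$ contains the pair $(a,d)$ while neither $b$ nor $c$ is saturated in $M$. From $M$ I would construct a feasible ODMLM-matching $M'$ of strictly smaller total cost, yielding a contradiction. The key observation driving the construction is that $b$ and $c$ each have at least one unit of spare capacity available, so the long edge $(a,d)$ can be re-routed through them.

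First I would remove the edge $(a,d)$ from $M$. This cannot violate any capacity upper bound, so the only way feasibility can fail is if $a$ or $d$ is left without any partner. I would then patch things up as follows: if $a$ is now matched to nothing, insert the pair $(a,b)$; if $d$ is now matched to nothing, insert the pair $(c,d)$. A small but essential point to check is that each inserted pair is genuinely new in $M$, which holds because the endpoint in question dropped to degree zero precisely when $(a,d)$ was its sole incident pair. Since $b$ and $c$ are not saturated, each can absorb one additional incidence without exceeding its capacity, so $M'$ is feasible.

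Next I would compare costs. Because the four points lie on the real line in the order $a \leq b < c \leq d$, the path cost decomposes as $|a-d| = |a-b| + |b-c| + |c-d|$. The edges I inserted contribute at most $|a-b| + |c-d|$ to the cost, while the removed edge contributed $|a-d|$, so the net change is at most $-|b-c|$, which is strictly negative because $b < c$. This contradicts the optimality of $M$ and establishes the lemma.

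The main obstacle is the slightly finicky case analysis in the compensating step: one must enumerate the four sub-cases according to whether $a$ and/or $d$ retains another partner in $M \setminus \{(a,d)\}$ and verify in each sub-case that the inserted edges are absent from $M$ and that non-saturation of $b$ and $c$ covers every insertion. Once this bookkeeping is done, the cost-comparison step is routine arithmetic on the real line, and the argument can be viewed as the natural adaptation of Lemma~\ref{lem1} to the capacity-constrained setting, where the obstruction ``$b$ or $c$ saturated'' is exactly what can prevent the re-routing trick from going through.
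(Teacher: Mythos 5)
Your proposal is correct and follows essentially the same exchange argument as the paper: delete $(a,d)$, reinsert $(a,b)$ and $(c,d)$ using the spare capacity of $b$ and $c$, and conclude from $|a-d| = |a-b|+|b-c|+|c-d|$ with $b<c$ that the cost strictly drops, contradicting optimality. Your extra bookkeeping about which of the two edges actually needs to be inserted (and that they are not already present) is a slightly more careful treatment of feasibility than the paper gives, but it is the same proof idea.
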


\begin{proof} 
Suppose that the lemma is false. Let $M$ be an ODMLM-matching that contains such a pair $(a,d)$, and neither $b$ nor $c$ is a saturated point (Figure \ref{fig:2}a). Remove the pair $(a,d)$ from $M$ and add the pairs $(a,b)$ and $(c,d)$. The result $M'$ is still an ODMLM-matching which has a smaller cost, a contradiction.
\qquad\end{proof}

\begin{lemma}
\label{lem6}
Let $b<c$ be two points in $S$, and $a<d$ be two points in $T$ such that $a\le b<c\le d$. If an ODMLM-matching contains both of $(a,b)$ and $(b,d)$, then the point $c$ is a saturated point.  
\end{lemma}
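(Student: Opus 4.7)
The plan is to mirror the exchange argument used in the proof of Lemma~\ref{lem5}, but adapted to the single unsaturated candidate $c$ rather than to both $b$ and $c$. I would argue by contradiction: suppose some ODMLM-matching $M$ contains both $(a,b)$ and $(b,d)$, yet $c$ is not saturated. Since an ODMLM-matching is by definition a minimum-cost limited-capacity many-to-many matching, it will suffice to exhibit a feasible limited-capacity many-to-many matching $M'$ of strictly smaller total cost to reach a contradiction.

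The modification I would perform on $M$ is the following: delete $(b,d)$ from $M$, and if $(c,d)\notin M$ then also insert $(c,d)$. The first step is to verify that $M'$ is still a valid limited-capacity many-to-many matching. The pair $(a,b)$ is left untouched, so $a$ retains a match and $b$ still has at least the match to $a$ after losing $(b,d)$; moreover, deleting a pair can only decrease used capacity at $b$ and at $d$. The point $d$ does not become unmatched because $(c,d)$ is either pre-existing in $M$ or has just been inserted. Finally, $c$ already had at least one match in $M$ (so its lower bound is preserved), and since $c$ is not saturated by assumption, inserting $(c,d)$ in the case where it was not already present does not exceed $Cap(c)$.

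Next I would compare costs using the hypothesis $b<c\le d$, which yields $|b-d|=d-b>d-c=|c-d|$. If $(c,d)$ was already in $M$, then $M'=M\setminus\{(b,d)\}$ and the total cost drops by $d-b>0$. Otherwise the cost changes by $|c-d|-|b-d|=b-c<0$, again a strict decrease. In either case $M'$ is a cheaper feasible matching, contradicting the minimality of $M$, and hence $c$ must have been saturated.

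The main (and essentially only) obstacle is the bookkeeping around the subcase in which $(c,d)$ is already in $M$: one must not ``add'' a pair that is already present, and one must make sure $d$ does not become unmatched after removing $(b,d)$. Separating that subcase from the generic one makes the exchange rigorous while keeping the cost comparison a one-line calculation.
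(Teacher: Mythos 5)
Your proposal is correct and follows essentially the same exchange argument as the paper: remove $(b,d)$ and add $(c,d)$, using $b<c\le d$ to get a strict cost decrease while all capacity and lower-bound constraints remain satisfied. The only difference is your explicit handling of the subcase $(c,d)\in M$, which the paper glosses over but which only makes the argument more careful, not different.
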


\begin{proof} 
Suppose that the lemma is false. Let $M$ be an ODMLM-matching that contains both $(a,b)$ and $(b,d)$, and $c$ is not a saturated point (Figure \ref{fig:2}b). Then we can remove the pair $(b,d)$ from $M$ and add the pair $(c,d)$: the result $M'$ is still an ODMLM-matching which has a smaller cost, a contradiction.
\qquad\end{proof}

\begin{lemma}
\label{lem7}
Let $a$ be a point in $S$, and $b,c$ be two points in $T$ such that $a\le b<c$. If an ODMLM-matching contains $(a,c)$, then the point $b$ is saturated with some points $d$ in $S$ with $d\le b$. 
\end{lemma}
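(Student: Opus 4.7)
The plan is to mimic the swap arguments used for Lemmas \ref{lem5} and \ref{lem6}. I argue by contradiction: suppose $M$ is an ODMLM-matching containing $(a,c)$ but $b$ has no match with any point of $S$ lying at or to the left of $b$. Since $b\in T$ must be matched to at least one point of $S$, every neighbour of $b$ in $M$ is then some $d'\in S$ with $d'>b$; fix one such $d'$, so $(d',b)\in M$.

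The core move is the two-edge swap
\[
M' \;:=\; \bigl(M\setminus\{(a,c),(d',b)\}\bigr)\cup\{(a,b),(d',c)\}.
\]
Each of the four points $a,b,c,d'$ loses exactly one edge and gains exactly one edge, so every degree is preserved; hence every capacity upper bound and every $\ge 1$ lower bound of $M$ carries over to $M'$, and $M'$ is a feasible ODMLM-matching. The change in total cost equals $(b-a)+|c-d'|-(c-a)-(d'-b)$, which simplifies to $2(b-d')$ when $d'\le c$ and to $2(b-c)$ when $d'>c$; both values are strictly negative because $b<d'$ and $b<c$. Therefore $M'$ is cheaper than $M$, contradicting the optimality of $M$, and so $b$ must be matched to at least one $d\in S$ with $d\le b$.

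The most delicate point I anticipate is the exact reading of the conclusion. The swap above rules out only the presence of a right neighbour $d'>b$ of $b$; if the lemma is instead to be read as requiring that $b$ additionally be at its full capacity with all of its matches drawn from $\{d\in S:d\le b\}$, one must separately exclude the case in which $b$ has free capacity and only left neighbours. The natural one-for-one replacement $(a,c)\mapsto(a,b)$ would save $c-b>0$ but decrement $c$'s degree, and so is safe only when $c$ has degree at least two in $M$; the residual subcase in which $a$ is $c$'s unique neighbour would require a more careful two-step modification, possibly combined with the left neighbour of $b$ already guaranteed by the first part of the argument.
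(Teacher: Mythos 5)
Your proof is essentially the paper's own argument: the paper likewise takes a match $(b,d')$ with $d'>b$ and performs the same degree-preserving exchange, splitting into $b<d'<c$ (the explicit swap $(a,c),(b,d')\to(a,b),(d',c)$) and $c<d'$ (where it instead cites Lemma \ref{lem3}, which in the capacitated setting rests on the same kind of swap), so your single cost computation covering both subcases is just a slightly more self-contained rendering of the same idea. Your closing caveat is also accurate rather than a gap on your part: the paper's proof only excludes matches of $b$ to points $d>b$ and never establishes the literal ``saturated'' clause either (indeed, with $c$ having $a$ as its unique partner and $b$ having spare capacity the clause can fail), so your argument covers exactly what the paper's proof actually proves.
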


\begin{proof} 
Let $M$ be an ODMLM-matching that contains $(a,c)$, and the point $b$ is matched to some points $d$ in $S$. Suppose that the lemma is false. For $d$ two cases arise: $b<d<c$, and $c<d$. Assume that $b<d<c$ (Figure \ref{fig6}). Replace $(a,c)$ and $(b,d)$ in $M$ by the two pairs $(a,b)$ and $(d,c)$: the result $M'$ is still an ODMLM-matching that has a smaller cost, a contradiction. Now assume $c<d$, then $M$ contains both of $(a,c)$ and $(b,d)$ (Figure \ref{fig:2}c). This contradicts Lemma \ref{lem3}.
\qquad\end{proof}

\begin{figure}
\vspace{-5cm}
\hspace{-10cm}
\resizebox{2.5\textwidth}{!}{%
  \includegraphics{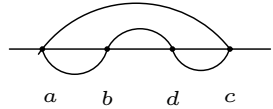}
}
\vspace{-38cm}
\caption{Suboptimal matchings. $(a,c)$  and $(b,d)$ do not both belong to an optimal matching.}
\label{fig6}       
\end{figure}

\begin{lemma}
\label{lem8}
In an ODMLM-matching, consider $A_i$ sets for all $i>0$. Then, each $A_i$ contains a point $q_i$, called here the optimal point, such that all points in $A_i$ less than $q_i$ are matched to some points $b$ with $b\le q_i$, all points in $A_i$ greater than $q_i$ are matched to some points $b'$ with $b'\ge q_i$, and $q_i$ is matched to either some points $b\le q_i$ or some points $b'\ge q_i$.
\end{lemma}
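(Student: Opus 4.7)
The plan is to argue by contradiction using Lemma \ref{lem3}. Because the $A_j$'s are the maximal blocks of consecutive same-set points in $S\cup T$, adjacent blocks $A_j$ and $A_{j+1}$ lie in opposite sets. Consequently, any match from a point $p\in A_i$ goes to the opposite set and therefore lies outside $A_i$; if $b$ is such a match with $b>p$ then in fact $b>\max A_i$, and symmetrically $b<p$ forces $b<\min A_i$. This lets me classify each $p\in A_i$ cleanly: put $p$ into $L$ if at least one match of $p$ lies below $p$, and into $R$ if at least one match lies above. Every point of $A_i$ belongs to $L\cup R$ since it must be matched to something.

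The heart of the argument is to show $\max L \le \min R$ in any ODMLM-matching. Suppose otherwise; then there are $p_L \in L$ and $p_R\in R$ with $p_R<p_L$, witnessed by a leftward match $a$ of $p_L$ and a rightward match $d$ of $p_R$. By the observation above, $a<\min A_i\le p_R$ and $d>\max A_i\ge p_L$, so $a<p_R<p_L<d$ with $p_R,p_L$ in one of $S,T$ and $a,d$ in the other. The matching therefore contains both $(a,p_L)$ and $(p_R,d)$, directly contradicting Lemma \ref{lem3}, which the paper already notes carries over verbatim to the ODMLM setting.

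With $\max L\le \min R$ in hand, I pick $q_i$ to be any point of $A_i$ satisfying $\max L\le q_i\le \min R$; a convenient explicit choice is $q_i=\max L$ when $L\ne\emptyset$ and $q_i=\min R$ otherwise, which guarantees $q_i\in A_i$. Then for any $p\in A_i$ with $p<q_i$, the inequality $p<q_i\le \min R$ forces $p\notin R$, so every match of $p$ lies below $p$ and in particular satisfies $b\le q_i$. Symmetrically, if $p>q_i$ then $p\notin L$ and every match of $p$ lies above $p$, giving $b'\ge q_i$. Finally $q_i$ itself is matched, so $q_i\in L\cup R$, i.e., it has a match $\le q_i$ or a match $\ge q_i$, which is exactly the third clause of the statement.

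The only wrinkles are the degenerate cases where $L$ or $R$ is empty, for instance when $i$ indexes the rightmost nonempty block and no rightward partners exist; these are handled by the fallback choices of $q_i$ above. The substantive content of the proof is a single invocation of Lemma \ref{lem3}, and I do not anticipate any real obstacle beyond the bookkeeping just described.
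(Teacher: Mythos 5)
Your proof is correct and takes essentially the same route as the paper's: both reduce the claim to the impossibility of two points of $A_i$ matched in crossing directions (one leftward, one rightward, with the left point matched right and the right point matched left), which is exactly the configuration forbidden by Lemma \ref{lem3}. Your $L$/$R$ classification and explicit choice of $q_i$ merely spell out in more detail the extraction of the separating point that the paper's terser proof leaves implicit.
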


\begin{proof} 
If $A_i$ contains a single point, the lemma is true. Now, assume that $|A_i|>1$ and the lemma is false. Let $a<b$ be two points in $A_i$, the point $a$ be paired with $c$, and the point $b$ be paired with $d$ where $b<c$ and $d<a$. This contradicts Lemma \ref{lem3}.
\qquad\end{proof}

\begin{lemma}
\label{lem9}
Let $A=\{a_1,a_2,\dots,a_s\}$ and $B=\{b_1,b_2,\dots,b_t\}$ be two sets of points on the real line, such that the number of the points of $B$ is equal to sum of the capacities of the points of $A$, i.e. $t=\sum^s_{j=1}{{\alpha }_j}$, and there is an ODMLM-matching between $A$ and $B$. Then $\left|A\right|\le |B|$.
\end{lemma}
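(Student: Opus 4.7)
The plan is to argue directly from the definition of an ODMLM-matching combined with the hypothesis $t=\sum_{j=1}^{s}\alpha_j$. The key observation is that by the definition of a limited-capacity many-to-many matching, every point of $A$ must be assigned to \emph{at least one} point of $B$ (and vice versa), and it can be matched to at most $\alpha_j$ points of $B$. For such an assignment to be compatible with the capacity cap, each capacity $\alpha_j$ must itself be at least $1$; otherwise $a_j$ could neither be left unmatched (violating the ``at least one'' requirement) nor matched (violating its capacity of $0$), so no ODMLM-matching between $A$ and $B$ would exist, contradicting the hypothesis.

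Given $\alpha_j\ge 1$ for every $j\in\{1,\dots,s\}$, I would then simply invoke the stated counting identity:
\[
|B|\;=\;t\;=\;\sum_{j=1}^{s}\alpha_j\;\ge\;\sum_{j=1}^{s}1\;=\;s\;=\;|A|,
\]
which is exactly the desired inequality $|A|\le |B|$.

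There is no real obstacle here; the lemma is a bookkeeping statement that will be used later to justify case analyses in the main algorithm (for instance, to guarantee that the ``$B$-side'' is large enough to absorb the full capacity budget coming from the ``$A$-side''). The only care needed in the write-up is to make the implicit step ``ODMLM-matching exists $\Rightarrow$ $\alpha_j\ge 1$ for all $j$'' fully explicit, since it is the sole place where the existence hypothesis is used.
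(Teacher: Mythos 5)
Your argument is correct and is essentially the paper's proof: both rest on the single observation that each $\alpha_j\ge 1$, so $\sum_{j=1}^{s}\alpha_j\ge s$, which together with $t=\sum_{j=1}^{s}\alpha_j$ gives $|A|\le|B|$ (the paper merely phrases it as a contradiction with $s>t$ rather than as a direct chain of inequalities). Your extra remark justifying $\alpha_j\ge 1$ from the existence of the matching is a harmless explicit version of what the paper takes from the problem definition.
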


\begin{proof} 
Assume that the lemma is not true and  $s>t$, then since the capacity of  each point $a_j\in A$ is ${\alpha}_j\ge 1$, so $t<\sum^s_{j=1}{{\alpha}_j}$. This contradicts the assumption $t=\sum^s_{j=1}{{\alpha}_j}$. 
\qquad\end{proof}


\begin{figure}
\vspace{-2cm}
\hspace{-4.9cm}
\resizebox{1.7\textwidth}{!}{%
  \includegraphics{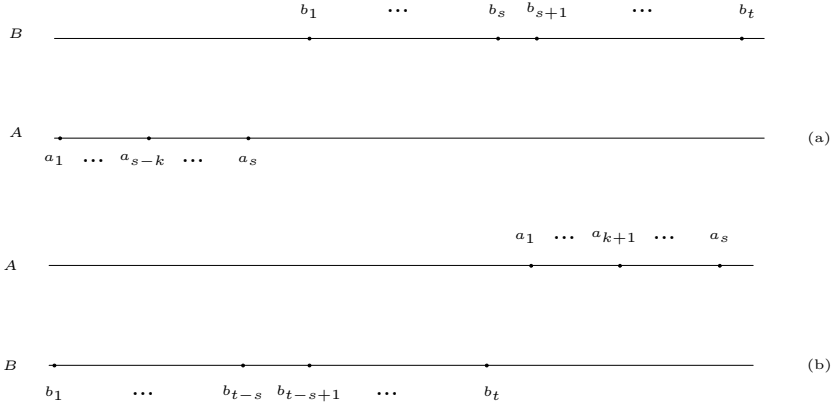}
}
\vspace{-21.5cm}
\caption{Illuustration of Lemma \ref{lem10}. (a) $a_s \le b_1$. (b) $b_t \le a_1$.}
\label{fig:7}       
\end{figure}


\begin{figure}
\vspace{-5cm}
\hspace{-8cm}
\resizebox{2.2\textwidth}{!}{%
  \includegraphics{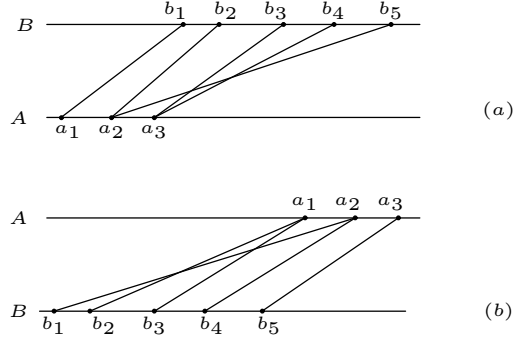}
}
\vspace{-30cm}
\caption{The minimum-cost matching obtained by Lemma \ref{lem10} for $C_A=\{2,2,2\}$: (a) $a_3 \le b_1$. (b) $b_5 \le a_1$.}
\label{fig:9}       
\end{figure}

\begin{lemma}
\label{lem10}
Let $A=\{a_1,a_2,\dots,a_s\}$ and $B=\{b_1,b_2,\dots,b_t\}$ be two sorted sets of limited capacity points on the real line ,such that either $a_s\le b_1$ or $b_t\le a_1$. Let $C_A=\{\alpha_1,\alpha_2,\dots,\alpha_s\}$ and $C_B=\{\beta_1,\beta_2,\dots,\beta_t\}$ be the capacity sets of $A$ and $B$, respectively, such that $t\le \sum^s_{j=1}{{\alpha}_j}$ and $t\ge s$. Then, we can compute an ODMLM-matching between $A$ and $B$ in $O(s+t)$ time. 
\end{lemma}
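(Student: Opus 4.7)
The plan is to reduce the problem to a distribution problem on the multiplicities with which the points of $A$ are matched, and then solve that distribution problem greedily in a single sweep.

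I would first show that any optimal ODMLM-matching $M$ between $A$ and $B$ uses exactly $t$ edges, so each $b_j$ is matched exactly once and $M$ is completely described by a count vector $\mathbf c=(c_1,\dots,c_s)$, where $c_i$ is the number of points of $B$ matched with $a_i$. Because $A$ and $B$ are strictly separated (either $a_s<b_1$ or $b_t<a_1$), every edge has strictly positive weight, so extra edges never pay. Concretely, when $|M|>t$ some $b_{j^*}$ has multiplicity $d_{j^*}\ge 2$; if some edge $(a_i,b_{j^*})\in M$ has $c_i\ge 2$ then deleting it strictly decreases the cost and drops $|M|$ by one, and otherwise the inequality $|M|>t\ge s$ guarantees some $a_{i'}$ (not adjacent to $b_{j^*}$) with $c_{i'}\ge 2$, after which the swap ``remove $(a_{i_0},b_{j^*})$ and $(a_{i'},b_{j'})$, insert $(a_{i_0},b_{j'})$'' (for any $a_{i_0}$ adjacent to $b_{j^*}$, whose count is forced to be $1$) achieves the same effect by a direct calculation that uses the separation of $A$ and $B$. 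Iterating yields $|M|=t$; the resulting count vector satisfies $1\le c_i\le\alpha_i$ and $\sum_i c_i=t$, and the hypotheses $t\ge s$ and $t\le\sum_j\alpha_j$ make this polytope non-empty.

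Next I would observe that when each $b_j$ is matched exactly once, the total cost telescopes to $\sum_{j=1}^{t} b_j-\sum_{i=1}^{s} c_i a_i$ in the case $a_s\le b_1$ and to $\sum_{i=1}^{s} c_i a_i-\sum_{j=1}^{t} b_j$ in the case $b_t\le a_1$, independent of \emph{which} points of $B$ are matched to which $a_i$. The cost is therefore a linear functional of $\mathbf c$ alone, and the problem reduces to either maximizing $\sum_i c_i a_i$ (first case) or minimizing it (second case) over integer vectors $\mathbf c$ in the polytope just described.

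This optimization is solved greedily. In the first case I would initialize $c_i=1$ for every $i$ and then push the remaining $t-s$ units onto the largest indices first, setting $c_s\gets\min(\alpha_s,t-s+1)$, then $c_{s-1}\gets\min(\alpha_{s-1},t-s+2-c_s)$, and so on; a one-line pair-swap exchange (moving a unit from $c_j$ to $c_i$ with $i>j$ changes the objective by $a_i-a_j>0$) verifies optimality. The second case is handled symmetrically, starting from the left. The counts are computed in $O(s)$ time, and the matching itself, emitted as any count-respecting assignment (for example as contiguous blocks of $B$), takes $O(s+t)$ time, proving the claimed bound. The part that requires the most care is the exchange argument of the first paragraph, where the corner case in which every edge incident to the duplicated $b_{j^*}$ has $c_{i}=1$ forces the two-remove-one-insert variant; checking that this rerouting is always available is exactly where both hypotheses $t\ge s$ and $t\le\sum\alpha_j$ enter in an essential way.
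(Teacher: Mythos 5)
Your construction ends up producing exactly the matching the paper describes (saturate the points of $A$ nearest to $B$ first, leave the far points single-matched, with one possibly partial boundary point), so the algorithmic content agrees; the difference is in how optimality is argued. The paper simply asserts that the minimum cost is ``obviously'' obtained by matching each point of $B$ to the closest non-saturated point of $A$ and then describes the resulting structure, whereas you supply the justification it skips: first an exchange argument showing an optimal matching may be assumed to have exactly $t$ edges, so it is encoded by a count vector $(c_1,\dots,c_s)$ with $1\le c_i\le\alpha_i$ and $\sum_i c_i=t$, and then the telescoping identity (cost $=\sum_j b_j-\sum_i c_i a_i$ when $a_s\le b_1$, and its mirror image) reducing the problem to a one-line linear optimization over that box, solved greedily from the near side. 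That reduction is the cleaner and more defensible route, and it also makes transparent why only the counts, not the specific pairing of $B$-points to $A$-points, matter. One small correction: the lemma allows $a_s\le b_1$ (and $b_t\le a_1$) with equality, so you cannot claim every edge has strictly positive weight; the single possible zero-weight edge means your deletion/rerouting steps should be stated as ``do not increase the cost'' rather than ``strictly decrease'' it, with termination guaranteed instead by the strict drop in the number of edges. With that wording fixed, your argument is complete and in fact more rigorous than the paper's.
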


\begin{proof} 
Either $a_s\le b_1$ or $a_1\ge b_t$. Obviously, the minimum cost is obtained by matching each point in $B$ with as closets point as possible in $A$. So, the two sets are matched as following.

Assume $t \le\sum^s_{j=1}{{\alpha}_j}$ and $a_s\le b_1$ (Figure \ref{fig:7}a). Initially, we match the first $s$ points in $B$ pairwise with the points in $A$. This guarantee that each point of $A$ is matched at least once. Then, starting from $b_{s+1}$, we match each point of $B$ to the closets point of $A$, here the largest point of $A$, that has not been saturated yet. So the points in $A$ are matched with the points in $B$ such that the last $k$ points of $A$ are saturated, the first $s-k-1$ points of $A$ are single matched, and the $\left(s-k\right)th$ point of $A$ may be neither saturated nor single matched, where $k$ is the greatest integer that $s-k-1+\sum^s_{j=s-k+1}{\alpha_j}<t$, i.e. $\sum^s_{j=s-k+1}{\alpha_j}<t-s+k+1$. In fact, $a_{s-k}$ is the border between the single matched points in $A$ and the saturated points, such that all points in $A$ lying to the left of $a_{s-k}$ are single matched and all points in $A$ lying to the right of it are saturated. Figure \ref{fig:9}a illustrates an example of this situation.

Now assume $t\le\sum^s_{j=1}{{\alpha}_j}$ and $a_1\ge b_t$. In this situation, that is shown in Figure \ref{fig:8},we assign the last $s$ elements in $B$ pairwise with the points in $A$ to guarantee that each point of $A$ is matched to at least one point of $B$. Then starting from $b_{t-s}$, we match each point of the remaining $t-s$ points in $B$, that is $b_1, \dots , b_{t-s}$, to the smallest point in $A$ that has not been saturated. Hence, the first $k$ points in $A$ are saturated, the last $s-k-1$ points are single matched, and the $\left(k+1\right)th$ point may be neither single matched nor saturated where $k$ is the greatest integer that $s-k-1+\sum^k_{j=1}{\alpha_j}<t$, i.e. $\sum^k_{j=1}{\alpha_j}<t-s+k+1$. Figure \ref{fig:7}b illustrates an example of this case of Lemma \ref{lem10}. 
\qquad\end{proof}

\begin{theorem}
Let $S$ and $T$ be two sets of points on the real line with $\left|S\right|+\left|T\right|=n.$ Then, a minimum-cost many-to-many matching with limited capacity between $S$ and $T$ can be determined in $O(n^2)$ time.
\end{theorem}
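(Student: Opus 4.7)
The plan is to lift the dynamic programming framework of Colannino et al.\ from single-valued states $C(q)$ to the two-parameter states $C(q,j)$ introduced in Section~\ref{OMAsection}, using Lemmas~\ref{lem5}--\ref{lem10} as replacements for the structural lemmas that underpin the unlimited-capacity algorithm. First, I partition $S\cup T$ into the blocks $A_0,A_1,A_2,\dots$ and, following Lemma~\ref{lem8}, observe that any optimal matching restricted to two consecutive blocks $A_w,A_{w+1}$ factors through an optimal point $q\in A_w$ that separates the points of $A_w$ matched leftwards from those matched rightwards. This reduces the task to, for each pair $(A_w,A_{w+1})$, combining a prefix cost $C(a_{j-1},\cdot)$ that is already tabulated with the cost of matching the suffix $a_j,\dots,a_s$ to the prefix $b_1,\dots,b_i$ of $A_{w+1}$.

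Second, I would perform a case analysis mirroring Cases~0--4 of Section~\ref{TMMAsection}. In the easy cases (one side has a single point, or $w=0$) the suffix-to-prefix matching is exactly the one-sided configuration of Lemma~\ref{lem10}, which supplies an $O(s+t)$-time optimal assignment by pairing the leftover $B$-points with the farthest-unsaturated $A$-point. In the general Case~4 I would extend the auxiliary quantities $X(b_i),Y(b_i),Z(b_i)$ to capacity-indexed versions $X(b_i,j),Y(b_i,j),Z(b_i,j)$, together with the running minimum $M_i$ over the suffix-costs $S_i$. The capacity constraints enter in two places: sums involving a repeated edge $e_s$ must truncate when $\alpha_s$ would be exceeded, and whenever a ``crossed'' pair appears in the optimal structure one invokes Lemma~\ref{lem5} or~\ref{lem6} to force the intermediate point to be saturated before allowing the crossing in the recurrence.

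For the running time, the state space has size $\sum_{q\in S\cup T}Cap(q)=O(n^2)$, since each capacity is bounded by the size of the opposite set. The recurrences for $X,Y,Z$ generalise naturally: the identities displayed at the end of Section~\ref{TMMAsection} still hold level-by-level once an additional capacity index is threaded through them, and $M_i$ remains a prefix minimum updatable in $O(1)$. Thus each $C(b_i,j)$ is filled in amortised $O(1)$ and the whole table in $O(n^2)$. The answer is read off as $C(m,Cap(m))$, where $m$ is the largest point of $S\cup T$.

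The hard part will be verifying that the suffix/prefix split enforced by Lemma~\ref{lem8} remains legal in the presence of capacities: when the optimal point $q$ is matched simultaneously to points in $A_{w-1}$ and $A_{w+1}$, its capacity budget must be split correctly between the two sides, and this split has to be reconciled with the $j$ stored in the previously computed $C(q,j)$. Handling this requires an extra minimisation over how $Cap(q)$ is apportioned between left-contribution and right-contribution, and the argument that this minimisation can be folded into the index $j$ without blowing up the state count past $O(n^2)$ or the per-state cost past amortised $O(1)$ is where the subtlety lies; Lemma~\ref{lem9} provides the feasibility bound $s\le t$ that keeps the capacity accounting consistent across blocks.
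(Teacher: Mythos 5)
Your overall framework---lifting $C(q)$ to capacity-indexed states $C(q,j)$, reusing the block decomposition and a case analysis mirroring Cases~0--4, and using Lemma~\ref{lem10} plus incremental recurrences for $X,Y,Z$ to keep each block pair at $O(s+t)$ work---is indeed the skeleton of the paper's proof. But there is a genuine gap: you implicitly assume that, as in the unlimited-capacity setting, every matched pair lies in the same or adjacent blocks, so that the whole computation ``reduces to'' matching a suffix of $A_w$ against a prefix of $A_{w+1}$. That reduction rests on Lemma~\ref{lem1} of Colannino et al., which fails once capacities are limited: exactly because of Lemmas~\ref{lem5} and~\ref{lem7}, an optimal ODMLM-matching may pair points from non-adjacent blocks when the intervening points are saturated, and this is forced to happen whenever the capacities of one block cannot absorb the demand of its neighbour. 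The paper spends most of its proof on precisely this situation: a \emph{Primary step} that, when $\sum_k\beta_k < s-j+1$, walks forward through $A_{w+2},A_{w+3},\dots$ until a set $A_{j'}$ with enough residual capacity is found and charges the intermediate saturated layers via repeated use of Lemma~\ref{lem10}; and a \emph{Case B} ($i>\sum_j\alpha_j$) that symmetrically walks backward through $A_{w-1},A_{w-2},\dots$ to place the excess points of $A_{w+1}$. Your single remark that one ``invokes Lemma~\ref{lem5} or~\ref{lem6} to force the intermediate point to be saturated before allowing the crossing'' does not supply a mechanism---no state or transition in your recurrence encodes an edge that skips over one or more blocks, so the algorithm as proposed cannot produce (or price) such matchings at all, and on instances like the one in Figure~\ref{fig:14} it would return $\infty$ or a wrong value.

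A second, smaller issue is that you explicitly leave open the point you yourself identify as the crux: how the capacity of the optimal point $q$ of Lemma~\ref{lem8} is apportioned between its left and right matches and reconciled with the stored index $j$. The paper resolves this concretely by always minimising over the two alternatives $\min\left(C\left(a_{h-1},\alpha_{h-1}\right),C\left(a_h,\alpha_h-1\right)\right)$ (and, in the $Z$-type terms, $\alpha_h-m$), i.e.\ by decrementing the capacity index of the boundary point according to how many right-side points it absorbs; this is what keeps the per-block work linear rather than introducing an extra minimisation over splits. Without that device, and without the forward/backward cross-block steps, the proposal does not yet constitute a proof of the $O(n^2)$ bound.
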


\begin{proof} We can compute an ODMLM-matching, by finding the optimal points defined in Lemma \ref{lem8}. The optimal point of each set $A_i$ for all $i$ separates the points lying to the left of the optimal point from the points lying to the right of it. Consider $A_w=\{a_1,a_2,\dots,a_s\}$ and $A_{w+1}=\{b_1,b_2,\dots,b_t\}$. Let $C_w=\{{\alpha}_1,{\alpha}_2,\dots,{\alpha}_s\}$ and $C_{w+1}=\{{\beta}_1,{\beta}_2,\dots,{\beta}_t\}$ be the capacity sets of the points in $A_w$ and $A_{w+1}$, respectively . 

Assume that we have computed $C\left(p,k\right)$ for all points $\{p\in S\cup T,\ with\ p\le a_s\ and\ 1\le k\le Cap(p)\}$, and now we want to compute $C(b_i,k)$ for each $b_i\in A_{w+1}and\ 1\le k\le {\beta }_i$. Our algorithm consists of two steps, the primary step and the main step explained in the following.

\textbf{Primary step.} starting from $a_1$, we investigate each $a_j\in A_w$ one by one in the ascending order. For each $a_j\in A_w$ that the demands of the points $a_j,a_{j+1},\dots,a_{s}$ can not be satisfied by the points in $A_{w+1}$, i.e. $\sum^t_{k=1}{{\beta}_k}<\left(s-j+1\right)$, we seek the next sets to find enough capacities according to Lemma \ref{lem7}. Let $C=\sum^t_{j=1}{\beta_j}$, and $A=\{a_j,a_{j+1},\dots,a_{j+C-1}\}$ be the $C$ consecutive elements of $A_w$ starting from $a_j$ (Figure \ref{fig:10}). We match $a_j,a_{j+1},\dots,a_{j+C-1}$ with the points of $A_{w+1}$ using Lemma \ref{lem10}, so the last $\beta_1$ points in $A$ are matched to the point $b_1$, the last $\beta_2$ points of the remaining $C-\beta_1$ points in $A$ are matched to the point $b_2$, and so on. Let $C'$ be the cost of matching the points in $A$ to the points in $A_{w+1}$ using Lemma \ref{lem10}, then:
\[C'=\sum^{j+C-1}_{k=j}{e_k}+\sum^t_{k=1}{\beta_kf_k}+min(C\left(a_{j-1},{\alpha}_{j-1}\right),C\left(a_j,{\alpha }_j-1\right)). \] 
Let \[A=\{a_{j+C},a_{j+C+1},\dots,a_s\}\cup A_{w+2}=\{{a'}_1,{a'}_2,\dots,{a'}_{s'}\},\] and
$A_{w+3}=\{{b'}_1,{b'}_2,\dots,{b'}_{t'}\}$. Let $C_{w+3}=\{{\beta}'_1,{{\beta}'}_2,\dots,{{\beta}'}_{t'}\}$ be the capacities of the points of $A_{w+3}$, and $C=\sum^{t'}_{j=1}{{{\beta}'}_j}$. Let ${e'}_i=|{b'}_1-{a'}_i|$, and ${f'}_i=|{b'}_i-{b'}_1|$. 
If $A_{w+3}$ can satisfy the demands of the points in $\{{a'}_1,{a'}_2,\dots,{a'}_{s'}\}$, that is $C\ge s'$, the process is stopped. Otherwise, if $C<s'$, we match the first $C$ points in $A$, that is $A'=\{{a'}_1,{a'}_2,\dots,{a'}_C\}$, with the points in $A_{w+3}$ using Lemma \ref{lem10}. So the last ${{\beta}'}_1$ points of $A'$ are matched to the point $b_1$, the last ${\beta '}_2$ points of the remaining $C-{{\beta}'}_1$ points in $A'$ are matched to the point $b_2$, and so on. Then, let
\[C'=C'+\sum^C_{j=1}{{e'}_j}+\sum^{t'}_{j=1}{{{\beta}'}_j{f'}_j}. \] 
Then, we compare sum of the capacities of the points in $A_{w+5}$ with the number of the points in $\{{a'}_{C+1},{a'}_{C+2},\dots,{a'}_{s'}\}\cup A_{w+4}$, so forth. 

This forward process is followed to find the first set, called $A_{j'}$, which can satisfy the demands of the remaining unmatched points in $$A=\left\{a_j,a_{j+1},\dots,a_s\right\}\cup \bigcup^{j'-1}_{k=w+2}{A_k}.$$ Note that it is possible that there does not exist a set such $A_{j'}$. If $A_{j'}$ exists, we should match the unmatched points in $A$ with the points in $A_{j'}$. 

In fact, by Lemma \ref{lem7} the points in $A_k$ can be matched to the points in $A_{k'}$ with $k<k'$ for all $w \le k \le j'-1$ (Figure \ref{fig:11}). Figure \ref{fig:12} shows an example for this situation.


\begin{figure}
\vspace{-4cm}
\hspace{-4cm}
\resizebox{1.5\textwidth}{!}{%
  \includegraphics{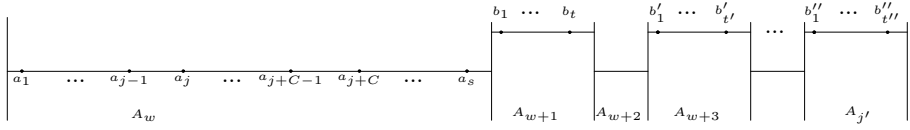}
}
\vspace{-22cm}
\caption {A partitioned point set that is used for illustration of the situation where $\sum^t_{k=1}{{\beta}_k}<\left(s-j+1\right)$.}
\label{fig:10}       
\end{figure}


\begin{figure}
\vspace{-5.3cm}
\hspace{-10cm}
\resizebox{2.6\textwidth}{!}{%
  \includegraphics{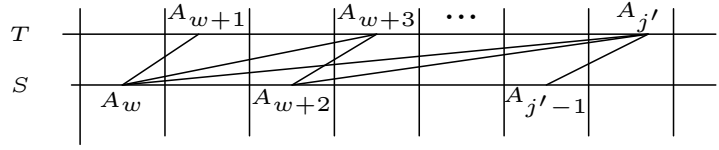}
}
\vspace{-39cm}
\caption{Illustration of the situation $\sum^t_{k=1}{{\beta}_k}<\left(s-j+1\right)$.The sets that can be matched in an ODMLM-matching are connected with a line.}
\label{fig:11}       
\end{figure}


\begin{figure}
\vspace{-4cm}
\hspace{-6cm}
\resizebox{1.8\textwidth}{!}{%
  \includegraphics{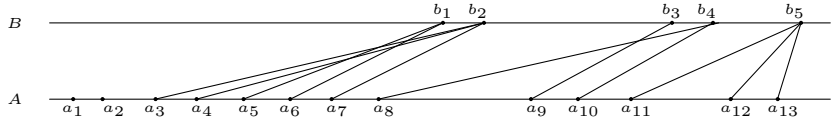}
}
\vspace{-27cm}
\caption{An example for the situation $\sum^t_{k=1}{{\beta}_k}<\left(s-j+1\right)$ with $C_B=\{2, 3, 1, 2, 4\}$.}
\label{fig:12}       
\end{figure}

Let \[A=\{{a''}_1,{a''}_2,\dots,{a''}_{s''}\}=\left\{a_j,a_{j+1},\dots,a_s\right\}\cup \bigcup^{j'-1}_{k=w+2}{A_k},\] and $A_{j'}=\{{b''}_1,{b''}_2,\dots,{b''}_{t''}\}$. Consider $C_A=\{{{\alpha}''}_1,{\alpha ''}_2,\dots,{\alpha ''}_{s''}\}$ and $C_{j'}=\{{\beta ''}_1,{\beta ''}_2,\dots,{\beta ''}_{t''}\}$ as the capacity sets of $A$ and $A_{j'}$, respectively. Let ${e''}_i=|{b''}_1-{a''}_i|$ and ${f''}_i=|{b''}_i-{b''}_1|$. 

Depending on the values of $s''$ and $t''$ four cases arise:
\begin{enumerate}
\item $s''=1,t''=1$. In this case ${a''}_1$ is paired with ${b''}_1$. Let $C'=C'+{e''}_1$, then we have:
\[C\left({b''}_1,k\right)={\min(C\left({b''}_1,k\right),\ C')\ } \ for \  all \ 1\le k\le {\beta ''}_1.\]

\item $s''=1,t''>1$. In this situation we compute the cost of matching the points ${b''}_1,{b''}_2,..,{b''}_{i'}$ to ${a''}_1$, and then update the costs of $C\left({b''}_{i'},k\right)$ for all $1 \le i'\le t''$:
\[C\left({b''}_{i'},k\right)={\min(C\left({b''}_{i'},k\right),\ C'+\sum^{i'}_{j=1}{{f''}_j}+i'{e''}_1)\ }\  for \ all \ 1\le k\le {\beta ''}_{i'}.\]

\item  $s''>1,t''=1$. In this situation the points ${a''}_1,{a''}_2,\dots,{a''}_{s''}$ are matched with ${b''}_1$, let $C'=C'+\sum^{s''}_{j=1}{{e''}_j}$, then:
\[C\left({b''}_1,k\right)=\ {\min(C\left({b''}_1,k\right),\ C'+\sum^{s''}_{j=1}{{e''}_j})\ } \ for \ all \ s''\le k\le {\beta ''}_1.\]

\item  $s''>1,t''>1$. In this case we compute the cost of matching ${a''}_1,{a''}_2,\dots,{a''}_{s''}$ to ${b''}_1,{b''}_2,\dots,{b''}_{i'}$ for all $1 \le i' \le t''$ using Lemma \ref{lem10}. We discuss two cases; $s''\le i'$ or  $s''>i'$. 

First assume $s''\le i'$. In this case either $\sum^{s''}_{j=1}{{\alpha ''}_j}=i'$ or $\sum^{s''}_{j=1}{{\alpha ''}_j}>i'$. If $\sum^{s''}_{j=1}{{\alpha ''}_j}=i'$, then by Lemma \ref{lem10} we have:
\[C'=C'+\sum^{s''}_{j=1}{{{\alpha ''}_je''}_j}+\sum^{i'}_{j=1}{{f''}_j}.\] 
And if $\sum^{s''}_{j=1}{{\alpha ''}_j}>i'$, then let $k'$ be the greatest integer that \[s''-k'-1+\sum^{s''}_{j=s''-k'+1}{{\alpha ''}_j}<i',\]
and \[m=i'-\left(s''-k'-1+\sum^{s''}_{j=s''-k'+1}{{\alpha ''}_j}\right)=i'-s''+k'+1-\sum^{s''}_{j=s''-k'+1}{{\alpha ''}_j},\] and \[C'=C'+\sum^{s''}_{j=s''-k'+1}{{\alpha ''}_j{e''}_j}+m{e''}_{s''-k'}+\sum^{s''-k'-1}_{j=1}{{e''}_j}+\sum^{i'}_{j=1}{{f''}_j},\] according to Lemma \ref{lem10}. Then, for all $s'' \le i' \le t''$ :
\[C\left({b''}_{i'},k\right)={min (C\left({b''}_{i'},k\right),\ C')\ }\  for \ all\  1\le k\le {\beta ''}_{i'}.\]
 Now assume $s''>i'$. In this situation either $\sum^{i'}_{j=1}{{\beta ''}_j}=s''$ or $\sum^{i'}_{j=1}{{\beta ''}_j}>s''$. If $\sum^{i'}_{j=1}{{\beta ''}_j}=s''$, then:
\[C\left({b''}_{i'},{\beta ''}_{i'}\right)={min (C\left({b''}_{i'},{\beta ''}_{i'}\right),\ C'+\sum^{s{''}}_{j=1}{{e''}_j}+\sum^{i'}_{j=1}{{\beta ''}_j{f''}_j})\ }.\] 
And if $\sum^{i'}_{j=1}{{\beta ''}_j}>s''$, then let $k'$ be the greatest integer that \[i'-k'-1+\sum^{k'}_{j=1}{{\beta ''}_j}<s'',\] \[m=s''-\left(i'-k'-1+\sum^{k'}_{j=1}{{\beta ''}_j}\right)=s''-i'+k'+1-\sum^{k'}_{j=1}{{\beta ''}_j},\] and \[C'=C'+\sum^{s''}_{j=1}{{e''}_j}+\sum^{k'}_{j=1}{{\beta ''}_j{f''}_j}+m{f''}_{k'+1}+\sum^{i'}_{j=k'+2}{{f''}_j}.\] 
Then, $C\left({b''}_{i'},k\right)={\min(C\left({b''}_{i'},k\right),\ C')\ }$, 
\[for\ \left\{ 
 \begin{array}{ll}
k={\beta ''}_{i'} & if\ i'=k' 
 \\ 

m\le k\le {\beta ''}_{i'} & if\ i'=k'+1
 \\ 
 
1\le k\le {\beta ''}_{i'} & if\ k'+1<i' 
 \end{array}
\right.\] 

\end{enumerate}

\textbf{Main step.} Now we start to compute $C\left(b_i,k\right)$ for all $1\le i\le t$ and $1\le k\le \beta_i$. For computing $C(b_i,k)$ for each $b_i\in A_{w+1}and\ 1\le k\le {\beta }_i$ two general cases arise:
\begin{description}
\item[Case A:] $i\le \sum^s_{j=1}{{\alpha }_j}$. In this case where the demands of the points in $A_{w+1}$ can be satisfied with the points in $A_w$, we discuss five cases. 

\begin{description}

\item[Case A.0:] $w=0$. This case is similar to Case 0 of [1]. If $i=\sum^s_{j=1}{{\alpha}_j}$, then by Lemma \ref{lem10} for all $1\le j\le {\beta}_i$ we have
\[C\left(b_i,j\right)=\sum^s_{k=1}{{\alpha}_ke_k}+\sum^i_{k=1}{f_k}. \] 
For $i<\sum^s_{j=1}{{\alpha}_j}$ we discuss two cases: $s\le i$, or $s>i$. Assume first that $s\le i$. In this situation, we match the points in $A_0$ with the points in $A_1$ by Lemma \ref{lem10}. So the first $s$ points in $A_1$, that is $b_1,b_2,\dots,b_s$, are matched pairwise with the points in $A_0$. 

If $s<i$, starting from $b_{s+1}$ each point of the remaining $i-s$ points in $A_1$ is matched to the largest point in $A_0$ that is not saturated.

Hence, the last $k$ points of $A_0$ are saturated, the first $s-k-1$ points are single matched, and $\left(s-k\right)th$ point of $A_0$ may be neither saturated nor single matched, where $k$ is the greatest integer that $s-k-1+\sum^s_{l=s-k+1}{{\alpha}_l}<i$ or $\sum^s_{l=s-k+1}{{\alpha}_l}<i-s+k+1$. 
Let \[m=i-\left(s-k-1+\sum^s_{l=s-k+1}{{\alpha}_l}\right)=i-s+k+1-\sum^s_{l=s-k+1}{{\alpha}_l}\] be the number of the points of $A_1$ that are matched to $a_{s-k}$. Then the value of $C\left(b_i,j\right)$ for all $1\le j\le {\beta}_i$ is:
\[C\left(b_i,j\right)=\sum^s_{l=s-k+1}{{\alpha}_le_l}+me_{s-k}+\sum^{s-k-1}_{l=1}{e_l}+\sum^i_{l=1}{f_l}\] 

Now assume that $s>i$. We identify three cases in this situation:

\begin{enumerate}

\item  $\sum^i_{j=1}{{\beta}_j}<s$. In this situation, we define $C\left(b_i,j\right)=\infty $ for all $1\le j\le {\beta}_i$.

\item  $\sum^i_{j=1}{{\beta}_j}=s$. In this situation by Lemma \ref{lem10}, the cost of the matching is minimized when the last ${\beta}_1$ points of $A_0$ are matched to the point $b_1$, the last ${\beta}_2$ points of the remaining $s-{\beta}_1$ points in $A_0$ are matched to the point $b_2$, and so on. Let $R'=\sum^s_{j=1}{e_j}+\sum^i_{j=1}{{\beta}_jf_j}$, then:
\[C\left(b_i,j\right)=\left\{ \begin{array}{cc}
\infty  & 1\le j<{\beta }_i \\ 
R' & j={\beta }_i \end{array}
\right.. \] 

\item  $\sum^i_{j=1}{{\beta}_j}>s$. By Lemma \ref{lem10}, in this situation the optimal matching is computed by assigning the last $i$ elements in $A_0$, $a_{s-i+1}, \dots , a_s$, pairwise with the points in $A_1$. Then, starting from $a_{s-i}$ we match each point of the remaining unmatched points in $A_0$ to the closest point, the smallest point, of $A_1$ that has not been saturated. Let \[R=\sum^s_{j=1}{e_j}+\sum^k_{j=1}{{\beta}_jf_j}+mf_{k+1}+\sum^i_{j=k+2}{f_j},\] $k$ be the greatest integer that \[i-k-1+\sum^k_{j=1}{{\beta}_j}<s,\] and \[m=s-\left(i-k-1+\sum^k_{j=1}{{\beta}_j}\right)=s-i+k+1-\sum^k_{j=1}{{\beta }_j}.\] 
\end{enumerate}

Then, $C\left(b_i,j\right)=R$,
\[for\ \left\{ 
 \begin{array}{ll} 
j={\beta }_i & if\ i=k 
 \\ 

m\le j\le {\beta }_i & if\ i=k+1 
 \\ 
 
1\le j\le {\beta }_i & if\ k+1<i 
 \end{array}
\right..\]

\item[Case A.1:] $w>0$, $\beta_1=1$, $i=1$. In this case, Lemma \ref{lem8} implies that $b_1$ must be paired with $a_s$. As Case 1 of [1], we identify two case: $(i)$ $a_s$ is paired with both $b_1$ and some other points $p<b_1$, and $(ii)$ $a_s$ is paired with only $b_1$. We choose the matching of minimum cost:

\[C\left(b_1,k\right)=\min(C\left(b_1,k\right),e_1+{\min \left(C\left(a_{s-1},{\alpha}_{s-1}\right),C\left(a_s,{\alpha}_s-1\right)\right)\ })\] for all $1\le k\le {\beta}_1$.

\item[Case A.2:] $w>0$, $s=1$, $i>1$. By Lemma \ref{lem8} all points in $A_{w+1}$ are assigned to $a_s$. As Case A.1, other points $p<b_1$ may be assigned to $a_s$. So:

\[C\left(b_i,k\right)=\min(C\left(b_i,k\right),\sum^i_{j=1}{f_j}+ie_1+{\min \left(C\left(a_{0},{\alpha}_{0}\right),C\left(a_1,{\alpha}_1-i\right)\right)})\] for all $1\le k\le {\beta }_i$.

\item[Case A.3:] $w>0$, $s>1$, $\beta_1>1$, $i=1$. In this case we should examine all points ${a}_{j}\in {A}_{w}$ and determine the point $q_i$ defined in Lemma \ref{lem8}. Let 

\[C_k={\min_{{s-k+1}\le h\le s} (\sum^s_{i=h}{e_i}+min(C\left(a_{h-1},\alpha_{h-1}\right),C\left(a_h,{\alpha_h-1}\right))},\] then we have: 

\[C\left(b_1,k\right)=min(C\left(b_1,k\right),C_k) \ for all \ 1\le k \le \beta_1.\]

\item[Case A.4:] $w>0$, $s>1$, $i>1$. In this situation, by Lemma \ref{lem8} we should find the optimal point of $A_w$. 

Let $Y(b_i)$ and $Z(b_i)$ represent the cost of connecting $b_1,b_2,\dots,b_i$ to exactly $i$ points, and fewer than $i$ points in $A_w$, respectively. Let $X(b_i,j)$ represents the cost of connecting $b_1,b_2,\dots,b_i$ to at least $i+1$ points, where the number of the points that can be matched to $b_i$ is limited to $j$, with $1\le j \le \beta_i$. The values of $X(b_i,j)$, $Y(b_i)$ and $Z(b_i)$ are computed by Lemma \ref{lem10} as following.

Let $S'_i$ be the cost of connecting the points $a_i,\ a_{i+1},\ \dots ,\ a_s$ to $b_1$ plus $min\left(C\left(a_{i-1},{\alpha }_{i-1}\right),C\left(a_i,{\alpha }_i-1\right)\right)$ for $1\le i\le s$, so  $$S'_i=\sum^s_{j=i}{e_j}+min\left(C\left(a_{i-1},{\alpha }_{i-1}\right),C\left(a_i,{\alpha }_i-1\right)\right).$$ 

Let $${S'}_{i,j}=S'_i+\sum^k_{l=1}{{\beta }_lf_l}+mf_{k+1}+\sum^j_{l=k+2}{f_l},$$ be the cost of matching $a_i, a_{i+1}, \dots, a_s$ to $b_1, b_2, \dots, b_j$ with $j<s-i+1$, plus $min\left(C\left(a_{i-1},{\alpha }_{i-1}\right),C\left(a_i,{\alpha }_i-1\right)\right)$, where $k$ is the greatest integer that $$j-k-1+\sum^k_{l=1}{{\beta }_lf_l}<s-i+1,$$ and \[m=s-i+1-j+k+1-\sum^k_{l=1}{{\beta }_lf_l}.\] 

Let ${M'}_{i,j}=min\{{S'}_{1,j},\ \dots ,{S'}_{i-1,j},\ {S'}_{i,j}\}$. Note that if $$\sum^{j-1}_{l=1}{{\beta }_l}+k<s-i+1$$ then \[{S'}_{i,j}=\infty \  for \  1\le k\le {\beta }_j.\]

Now we compute the values of $X\left(b_i,j\right)$, $Y\left(b_i\right)$, and $Z\left(b_i\right)$ for $1\le i\le min(s,t)$ and $1\le j\le {\beta }_i$ as following:
\[X\left(b_i,j\right)={M'}_{s-i,i},\] 
and 
\[Y\left(b_i\right)=\sum^{s-i+1}_{j=s}{e_j}+\sum^i_{j=1}{f_j}+min(C\left(a_{s-i},{\alpha }_{s-i}\right),C\left(a_{s-i+1},{\alpha }_{s-i+1}-1\right)).\] For the value of $Z\left(b_i\right)$ we have:
\[Z\left(b_i\right)={\mathop{min}_{s-i+2\le h\le s} \left(R_{ih}\right)\ },\] 

where \[R_{ih}=\left\{ \begin{array}{ll}
R+C\left(a_{h-1},{\alpha }_{h-1}\right)& if\ k=h\\ 
R+min(C\left(a_{h-1},{\alpha }_{h-1}\right),C\left(a_h,{\alpha }_h-m\right)& if\ k-1=h\\ 
R+min(C\left(a_{h-1},{\alpha }_{h-1}\right),C\left(a_h,{\alpha }_h-1\right) & if\ \ k-1>h\end{array}\right.\] 
That in which $R$ is the cost of connecting $a_h, \dots, a_s$ to $b_1, \dots, b_i$, so $$R=\sum^s_{j=s-k+1}{{\alpha }_je_j}+me_{s-k}+\sum^{s-k-1}_{j=h}{e_j}+\sum^i_{j=1}{f_j},$$
where $k$ is the greatest integer for which $$s-h+1-k-1+\sum^{s}_{j=s-k+1}{{\alpha }_j}<i,$$  $$m=i-s+h+k-\sum^{s}_{j=s-k+1}{{\alpha }_j}.$$ 
In fact, $k=h$ implies that $a_h$ is a saturated point, $k-1=h$ implies that $m$ points are matched to $a_h$, and $k-1>h$ implies that $a_h$ is matched to a single point.

Let $a_j$ be the smallest point in $A_w$ that for which $$\sum^{i-1}_{l=1}{{\beta }_l}+k\ge s-j+1,$$ then the value of $C\left(b_i,k\right)$ for all $1\le k\le {\beta }_i$ is:
\[C\left(b_i,k\right)=\left\{ 
 \begin{array}{ll}
{min \left(X\left(b_i,k\right),\ Y\left(b_i\right),\ Z\left(b_i\right),C\left(b_i,k\right)\right)\ } & i<s-j+1 
 \\ 
 
{min \left(Y\left(b_i\right),\ Z\left(b_i\right),C\left(b_i,k\right)\right)\ } & i=s-j+1 
 \\ 
 
min({\mathop{min}_{j\le h\le s\ } \left(R_{ih}\right)\ },C\left(b_i,k\right)) & i>s-j+1 
 \end{array}
\right..\]

\end{description}

\item[Case B:] $w>0$, $i>\sum^s_{j=1}{{\alpha }_j}$. In this case, the sum of the capacities of the points in $A_w$ is less than the number of the points in $A_{w+1}$, so by Lemma \ref{lem7} we should seek the previous sets to find enough capacities for $b_1,b_2,\dots,b_i$ (Figure \ref{fig:13}). Let $C=\sum^s_{j=1}{{\alpha }_j}$, then the last $C$ elements in $A_{w+1}$, that is $b_{i-C+1},\dots,b_{i-1},b_i$, are matched to the points in $A_w$ according to Lemma \ref{lem10}. The cost of this matching is:
\[C'=\sum^s_{j=1}{{{\alpha}_je}_j}+\sum^i_{j=i-C+1}{f_j}.\] 
Let \[A_{w-1}\cup \left\{b_1,b_2,\dots,b_{i-C}\right\}=\{{b'}_1,{b'}_2,\dots,{b'}_{t'}\}\ with \ {b'}_1<{b'}_2<\ \dots <{b'}_{t'},\] \[A_{w-2}=\{{a'}_1,{a'}_2,\dots ,{a'}_{s'}\}\ with\ {a'}_1<{a'}_2<\ \dots <{a'}_{s'},\] \[C_{w-2}=\{{\alpha '}_1,{\alpha '}_2,\dots,{\alpha '}_{s'}\},\] and $C'=\sum^{s'}_{j=1}{{\alpha '}_j}$. 

If sum of the capacities of the points in $A_{w-2}$ is greater than or equal to the number of the points in $A_{w-1}\cup \{b_1,b_2,\dots,b_{i-C}\}$, that is $t'\le C'$, the process is stopped. Otherwise, the last $C'$ elements in $\{{b'}_1,{b'}_2,\dots ,{b'}_{t'}\}$, that is $\{{b'}_{t'-C'+1},\dots,{b'}_{t'-1},{b'}_{t'}\}$, are matched to the points in $A_{w-2}$ according to Lemma \ref{lem10}. Let ${e'}_i=|{b'}_1-{a'}_i|$ and ${f'}_i=|{b'}_i-{b'}_1|$, then:
\[C'=C'+\sum^{s'}_{j=1}{{{\alpha '}_je'}_j}+\sum^{t'}_{j=t'-C'+1}{{f'}_j}.\] 
Now we compare the sum of the capacities of the points in $A_{w-4}$ with the number of the points in  $A_{w-3}\ \cup \left\{{b'}_1,{b'}_2,\ \dots ,{b'}_{t'-C'}\right\}$, and so on.


\begin{figure}
\vspace{-3cm}
\hspace{-3.5cm}
\resizebox{1.5\textwidth}{!}{%
  \includegraphics{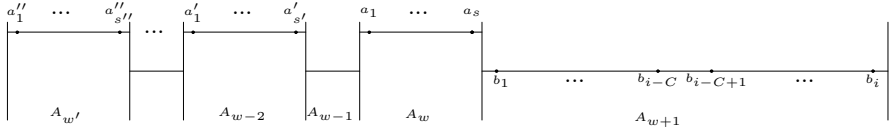}
}
\vspace{-22cm}
\caption{A partitioned point set used for illustration of Case B: $i>\sum^s_{j=1}{{\alpha }_j}$.}
\label{fig:13}       
\end{figure}

This backward process is followed until finding the first set, called $A_{w'}$, that can satisfy the demands of the remaining unmatched points in the set \[A_{w'+1}\cup \dots \ \cup \ A_{w-1}\ \cup \ A_{w+1},\] i.e. the sum of the capacities of the points in $A_{w'}$ is greater than or equal to the number of the unmatched points in the set $A_{w'+1}\cup \dots \ \cup \ A_{w-1}\ \cup \ A_{w+1}$. It is possible that we do not reach such a set $A_{w'}$, in this situation $C\left(b_i,k\right)=\infty $ for all $1\le k\le {\beta}_i$.

Let $A=\{{b''}_1,{b''}_2,\dots,{b''}_{i'}\}$ be the set of the remaining points in $\bigcup^{w+1}_{k=w'+1}{A_k}$, $A_{w'}=\{{a''}_1,{a''}_2,\dots,{a''}_{s''}\}$, $a_0$ the largest point in $A_{w'-1}$. 

Let $C_A=\{{\alpha ''}_1,{\alpha ''}_2,\dots,{\alpha ''}_{s''}\}$ and $C_B=\{{\beta ''}_1,{\beta ''}_2,\dots,{\beta ''}_{i'}\}$ be the capacity sets of the points in $A_{w'}$ and $A$, respectively. Let $C'\left({b''}_{i'},k'\right)$ be the cost of matching the points in $A_{w'}$ and $A$ using Case A.0, Case A.1, Case A.2, Case A.3, or Case A.4, depending on the values of $w'$, $s''$, and $i'$. Then, we have

\[C\left(b_i,k\right)={\mathop{\min}_{1\le k'\le {{\beta }''\ }_{i'}} C'\left({b''}_{i'},k'\right)\ }+C' \  for\  all\  1\le k\le {\beta}_i.\]

Figure \ref{fig:14} illustrates an example of Case B, where $A_0=\{s_1\}$, $A_1=\{t_1\}$, $A_2=\{s_2,s_3\}$, $A_3=\{t_2,t_3,t_4,t_5,t_6,t_7\}$, $C_S=\{5,2,3\}$, and $C_T=\{2,3,3,1,2,3,3\}$. First, the last $5$ points in $A_3$ are mapped to the points in $A_2$, then we seek the previous sets to find enough capacities, so we match $t_2$ with $s_1$.


\begin{figure}
\vspace{-5.3cm}
\hspace{-8cm}
\resizebox{2.2\textwidth}{!}{%
  \includegraphics{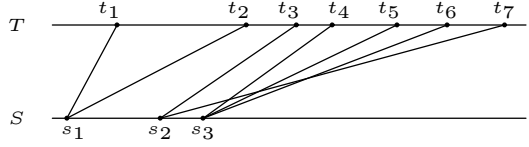}
}
\vspace{-33cm}
\caption{An example for Case B where $t=7$, $s=3$ and $C_S=\{3,2,3\}$.}
\label{fig:14}       
\end{figure}

\end{description}

Note that 

\[Y\left(b_i\right)=Y\left(b_{i-1}\right)+e_{s-i+1}+f_i+\]\[min(C\left(a_{s-i},{\alpha}_{s-i}\right),C\left(a_{s-i+1},{\alpha }_{s-i+1}-1\right))-\]\[min(C\left(a_{s-i+1},{\alpha}_{s-i+1}\right),C\left(a_{s-i+2},{\alpha}_{s-i+2}-1\right)),\] and \[R_{ih}=R_{(i-1)h}+f_i+e_k,\] where $a_k$ is the largest point in $A_w$ that is not saturated in the matching corresponding to $R_{(i-1)h}$. And finally, the values of $X\left(b_i,j\right)$ for all $1\le j\le {\beta }_i$ is computed in $O(s+t)$ time, since \[S'_{i-1}=S'_i+e_i+min\left(C\left(a_{i-2},{\alpha }_{i-2}\right),C\left(a_{i-1},{\alpha }_{i-1}-1\right)\right)-\]\[min\left(C\left(a_{i-1},{\alpha }_{i-1}\right),C\left(a_i,{\alpha}_i-1\right)\right),\] 
and ${S'}_{i,j}={S'}_{i,j-1}+f_j-b_k$, where $b_k$ is the largest point of $A_{w+1}$ in the matching corresponding to ${S'}_{i,j-1}$ that its degree is more than one. So the values of $X\left(b_i,j\right)$, $Y\left(b_i\right)$, and $Z\left(b_i\right)$ can be computed in $O(s+t)$ time. Therefore, we can compute $C\left(b_i,k\right)$ for all $1\le i\le t$ and $1\le k\le {\beta}_i$ in $O(n)$ time, and our algorithm computes an ODMLM-matching between two point sets with total cardinality $n$ in $O(n^2)$ time.
\qquad\end{proof}

 \section{Concluding Remarks}
 \label{Conclsection}

The limited capacity many-to-many point matching is a many-to-many matching where each point has a limited capacity. In this paper, we study the one dimensional limited capacity many-to-many matching problem, called ODMLM-matching problem, in which we match two point sets that lie on the real line. We provide an algorithm that determines an ODMLM-matching between two point sets with total cardinality $n$ in $O(n^2)$ time. The two-dimensional version of this problem is open which may be solved using the geometric information.



\bibliographystyle{elsarticle-num}
\bibliography{<your-bib-database>}



\end{document}